\title{
  High-Precision Multi-Qubit Clifford+T\\Synthesis by Unitary Diagonalization
}
\author{Mathias Weiden \ \ \ Justin Kalloor \ \ \  John Kubiatowicz
    \institute{{University of California, Berkeley}}
    \email{
        {
            \{mtweiden, jkalloor3, kubitron\}@berkeley.edu 
        }
    }
\and
    Ed Younis \ \ \ Costin Iancu
    \institute{Lawrence Berkeley National Laboratory}
    \email{\{edyounis, cciancu\}@lbl.gov}
}
\begin{document}
\newcommand{\unitarygroup}[0]{\mathbb{U}(2^n)}
\newcommand{\realnum}[0]{\mathbb{R}}
\newcommand{\complexnum}[0]{\mathbb{C}}
\newcommand{\phase}[1]{e^{i#1}}
\newcommand{\expectation}[0]{\mathbb{E}}
\newcommand{\integers}[0]{\mathbb{Z}}

\newcommand{\target}[0]{U_{tar}}
\newcommand{\circuit}[0]{C_t}
\newcommand{\perturbation}[0]{U_{\epsilon}}

\newcommand{\initialstates}[0]{S_{I}}
\newcommand{\terminalstates}[0]{S_{T}}
\newcommand{\statespace}[0]{S}
\newcommand{\actionspace}[0]{A}
\newcommand{\s}[0]{s}
\newcommand{\mdpstate}[1]{\s_{#1}}
\newcommand{\ac}[0]{a}
\newcommand{\action}[1]{\ac_{#1}}
\newcommand{\trajectory}[0]{\tau}

\newcommand{\gridsynth}[0]{\text{\emph{gridsynth}}}

\long\def\comment#1{}
\long\def\note#1{{\em #1 }}
\def\parah#1{\vspace*{0.0in} \noindent{\bf #1:}}
\newcommand{\red}[1]{\textcolor{red}{#1}}

\maketitle
\newtheorem{theorem}{Theorem}

\begin{abstract}
Resource-efficient and high-precision approximate synthesis of quantum circuits expressed in the Clifford+T gate set is vital for Fault-Tolerant quantum computing.
Efficient optimal methods are known for single-qubit $R_Z$ unitaries, otherwise the problem is generally intractable. 
Search-based methods, like simulated annealing, empirically generate low resource cost approximate implement\-ations of general multi-qubit unitaries so long as low precision (Hilbert-Schmidt distances of $\epsilon \geq 10^{-2}$) can be tolerated.
These algorithms build up circuits that directly invert target unitaries.
We instead leverage search-based methods to first approximately diagonalize a unitary, then perform the inversion analytically.
This lets difficult continuous rotations be bypassed and handled in a post-processing step.
Our approach improves both the implementation precision and run time of synthesis algorithms by orders of magnitude when evaluated on unitaries from real quantum algorithms.
On benchmarks previously synthesizable only with analytical techniques like the Quantum Shannon Decomposition, diagonalization uses an average of 95\% fewer non-Clifford gates.


\end{abstract}

\section{Introduction}
\label{sec:intro}
Recent small-scale demonstrations of error-corrected quantum memory signal significant progress to\-ward the development of Fault-Tolerant (FT) quantum computers~\cite{acharya_2024_below_threshold, bluvstein_2023_ftquera, dasilva_2024_ftquantinuum}.
In this setting, programs must be expressed in universal FT gate sets, which usually consist of gates from the Clifford group and at least one non-Clifford gate for universality.
The Clifford+T gate set ($H, S, CNOT, T$) is one such gate set targeted by many quantum compilers~\cite{quipper, tket, pyzx}.

\comment{As executing non-Clifford gates in many models of FT computation relies on expensive magic state distillation \cite{knill_2004_faulttolerant, bravyi_2005_magicstate}, a common goal of compilation tools to minimize the number of non-Clifford gates in a transpiled program.}

\begin{figure}[t]
    \centering
    \includegraphics[width=.65\linewidth]{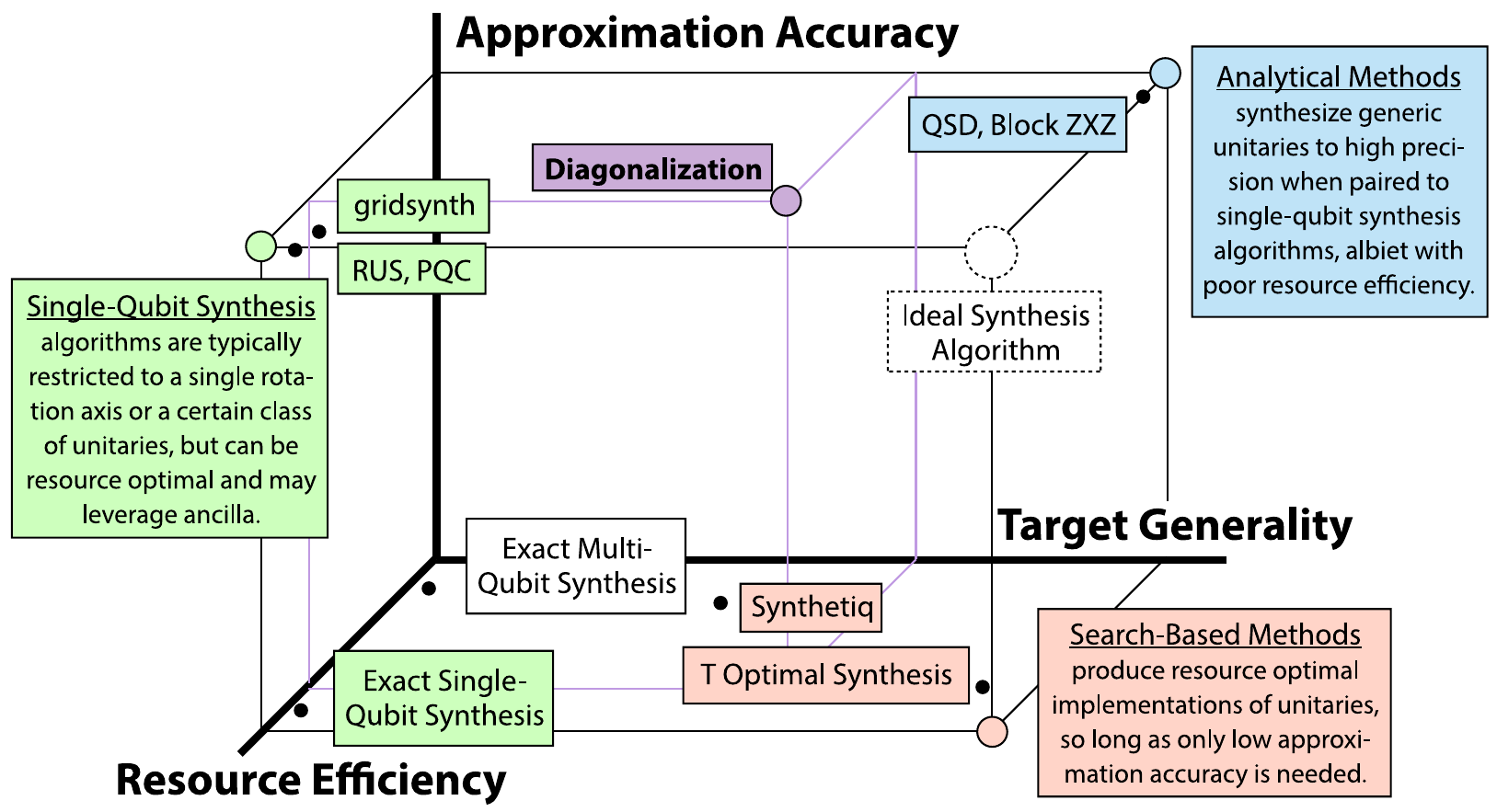} 
    \caption{\footnotesize 
        Tradeoffs for synthesis algorithms targeting the Clifford+T gate set.
        \emph{Approximation accuracy} refers to the precision with which a target program is implemented.
        In the FT setting, this value must be high in order for program outputs to be meaningful.
        \emph{Resource efficiency} refers to non-Clifford gate counts.
        As these gates are orders of magnitude more expensive than Clifford gates, synthesis algorithms should use as few non-Clifford gates as possible.
        \emph{Target generality} refers to the size of the set of unitaries that can be taken as input.
        This ranges from exactly synthesizable unitaries and approximations of 1-qubit $R_Z$ rotations, to the most general class of all multi-qubit unitaries.
    }
    \label{fig:axis}
\end{figure}

Quantum compilers must use approximate unitary synthesis to ensure circuits are transpiled, or translated, into FT gate sets~\cite{solovay_kitaev}.
Synthesis algorithms must balance \textit{resource efficiency} (e.g. non-Clifford gate count), \textit{approximation accuracy} (error) and \textit{target generality} (width and size of set of unitaries that can be taken as input).
Existing approaches can be categorized along the principle \textit{``re\-source efficiency, high precision, and generality: pick any two!"}.





\comment{
The function of any $n$-qubit quantum circuit is fully specified by a unitary matrix $U \in \unitarygroup$.
Unitaries with all elements in the ring $\mathbb{Z}[e^{i\pi/4}, \frac{1}{2}]$ can be implemented exactly in the Clifford+T gate set \cite{kliuchnikov_2012_exact, giles_2013_exactmultiqubit}. However, unitaries which appear in real algorithms rarely follow this form. Fortunately, the Clifford+T gate set can approximate any unitary transformation~\cite{solovay_kitaev}.

General \emph{analytical methods} paired with single-qubit approximate synthesis algorithms can decompose any unitary, but the resultant circuits contain $O(4^N)$ gates~\cite{shende_2006_qsd, devos_2016_blockzxz}.
In restricted cases (i.e., single qubit $R_Z$ rotations), finding optimal Clifford+T approximations of unitaries is efficient \cite{ross_2016_gridsynth}.
Other unitaries, such as multi-qubit diagonal unitaries, can be approximated efficiently, albeit without guarantees of optimality \cite{bullock_2004_diagonal}.
For more general unitaries, an optimal approximation is intractable \cite{gosset_2013_tcount}.

\emph{Search-based} circuit synthesis methods perform discrete searches through the space of possible quantum gate placements \cite{amy_2013_middle, gheorghiu_2022_tcount, paradis_2024_synthetiq}.
These algorithms can discover more efficient circuits than analytical methods, but fail when high precision is needed.
Continuously parameterized rotation operations typically correspond to very long sequences of discrete gates (see Figure~\ref{fig:strategies}).
Existing methods fail in these scenarios because discovering these long sequences is difficult.
}

\comment{To summarize, analytical synthesis algorithms generate high-precision implementations of arbitrary unitaries using a large number of resources, while search-based methods produce very coarse approximations using fewer resources.}

Our work advances FT synthesis along these three axes by combining analytical and search-based methods. 
While existing algorithms attempt to directly invert target unitaries, our main insight is to diagonalize instead: we perform a discrete search until target unitaries are (approximately) diagonalized.
This process reveals single-qubit rotations that are difficult to compile with search-based multi-qubit synthesis algorithms.
Instead of looking for discrete implementations of these continuously param\-eterized gates, we leverage analytical techniques to implement them. Synthesis-by-diagonalization  \linebreak approximates unitaries with precisions that are orders of magnitude higher than other search-based multi-qubit synthesis algorithms without negatively impacting run time and while maintaining good resource efficiency.

We demonstrate how synthesis-by-diagonalization can implement unitaries that are out of reach for other synthesis algorithms.
We retrofitted a simulated annealing-based synthesis algorithm~\cite{paradis_2024_synthetiq} and trained Reinforcement Learning agents to perform synthesis-by-diagonalization.
This approach is general, and other synthesis tools can also be modified to diagonalize rather than invert.
We also deploy diagonalization in an end-to-end gate set transpilation workflow and demonstrate utility for actual algorithms.
In particular, we observe up to an 18.1\% reduction in T count.
We believe our synthesis tools can help automate the discovery of gadgets that exploit ancilla to improve resource efficiency.

\comment{Diagonalization produces high precision results when state-of-the-art search-based tools fail.
This approach is general; any search-based tool can be retrofitted for diagonalization. 
Our data indicate that the relaxation to diagonalization enables high precision synthesis of unitaries from real quantum algorithms, and reduces resources used compared to analytical methods.
}

The remainder of the paper is organized as follows:
Section~\ref{sec:background} discusses necessary background information relating to synthesis with discrete gate sets.
Section~\ref{sec:diagonalization} explains how matrix diagonalization can be framed in the language of synthesis.
Section~\ref{sec:experiments} evaluates diagonalization by synthesizing unitaries taken from partitioned quantum algorithms.
Section~\ref{sec:discussion} discusses methods of scaling diagonalization.
Finally, Section~\ref{sec:conclusion} offers concluding remarks.

\section{Background}
\label{sec:background}
\begin{figure*}[t]
    \centering
    \includegraphics[width=.8\linewidth]{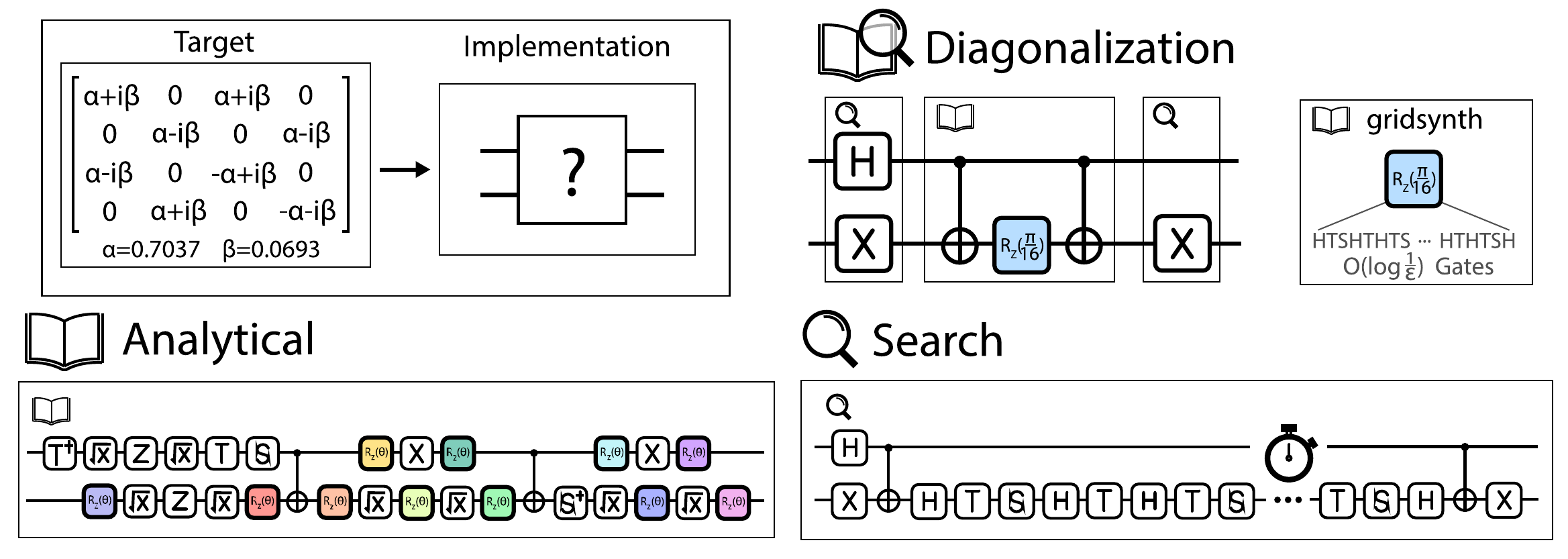}
    \caption{
        Comparison of analytical, search-based, and diagonalization strategies for Clifford+T synthesis.
        Analytical decomposition approaches, such as the Quantum Shannon Decomposition, are universally applicable but use many expensive non-Clifford resources.
        Search-based methods are able to find very low gate count approximations, but are intractable except at low precision.
        Diagonalization is a hybrid approach that enables both high quality and high precision.
        Search is used to diagonalize targets, then analytical methods are used to handle the diagonal results.
        \gridsynth{} \cite{ross_2016_gridsynth} is used to optimally decompose continuous single-qubit $R_Z$ operations (colored boxes) into Clifford+T gates.
    }
    \label{fig:strategies}
\end{figure*}

Fault-Tolerant (FT) quantum computing relies on Quantum Error Correction (QEC) to enable resilient quantum information processing \cite{shor_1995_code}.
Different QEC codes admit different gate sets.
Gates in the Clifford group are classically simulable and often cheap to implement in FT architectures~\cite{knill_2004_faulttolerant, mike_and_ike}.
For universal quantum computation, the logical gate set must also contain a non-Clifford operation~\cite{eastin_2009_eastinknill}.
This work will focus on the approximate compilation of unitaries into the Clifford+T gate set.

The T gate can be executed in a fault-tolerant manner through magic state distillation and injection with Clifford gate corrections \cite{mike_and_ike}. 
This state distillation process is extremely costly.
Estimates show that up to $99\%$ of the resources on a quantum computer could be dedicated to implementing these operations \cite{fowler_2012_surfacecode, O_Gorman_2017}.
Minimizing the number of non-Clifford gates for a given implementation of a unitary is, therefore, an essential step in realizing practical FT quantum computation.

Quantum algorithms are often expressed using operations not directly compatible with QEC codes.
Compilers require methods of transpiling into compatible gate sets.
As multi-qubit unitary synthesis has proven to be a powerful tool for translating between gate sets in the Noisy Intermediate Scale Quantum (NISQ) era~\cite{preskill_2018_nisq}, it is a natural candidate for ensuring algorithms are suitable for FT machines~\cite{younis_2022_transpilation}.
This is a vital component of any FT compiler.

\subsection{Fault-Tolerant Unitary Synthesis}
If every element of an $n$-qubit unitary is in the ring $\mathbb{Z}[e^{i\pi/4}, 1/2]$, it can be implemented exactly with the Clifford+T gate set \cite{kliuchnikov_2012_exact, giles_2013_exactmultiqubit}.
Unitaries appearing in real algorithms are rarely this well structured; they often contain arbitrary angle rotations.

In the single-qubit case, unitaries are decomposed into Clifford $\sqrt{X}$ and non-Clifford $R_Z(\theta)$, so that
\begin{equation}
    U = R_Z(\theta_1) \times  \sqrt{X} \times R_Z(\theta_2) \times \sqrt{X} \times R_Z(\theta_3).
    \label{eqn:single_qubit_unitary}
\end{equation}
These $R_Z$ rotations can be approximated \emph{optimally} to arbitrary precision using the \gridsynth{} algorithm from Ross and Selinger~\cite{ross_2016_gridsynth}.
For some precision $\epsilon$, optimal ancilla-free translation of $R_Z$ gates into Clifford+T operations requires $O(\log 1/\epsilon)$ T gates. This is true regardless of the rotation angle, except in select cases such as $\theta \in \{k\pi/4 : k \in \integers \}$. Ancilla-based methods can generate more resource-efficient approximations~\cite{landahl_2013_cisc, bocharov_2015_rus, bocharov_2015_fallback}.
Previous work in circuit compilation demonstrates the utility of multi-qubit unitary synthesis~\cite{davis_2020_qsearch, smith_2021_leap, younis_2022_transpilation, wu_2021_qgo}.
These methods discover approximate circuit implementations using numerical optimization and parameterized (non-FT) gates.
Circuits synthesized this way can be straightforwardly transpiled into the Clifford+T gate set by decomposing parameterized single-qubit gates using Equation~\ref{eqn:single_qubit_unitary} and \gridsynth{}.
While these methods often find circuits with fewer two-qubit gates, they often result in circuits containing many $R_Z$ (and therefore T) gates.

The Quantum Shannon Decomposition and its variants are also powerful synthesis algorithms.
These analytical methods produce circuit implementations that approach the asymptotic lower bound of $O(4^n)$ CNOT and $R_Z$ gates~\cite{shende_2006_qsd, devos_2016_blockzxz}.
Again, this technique results in many non-Clifford gates.

Another option is to deploy a synthesis algorithm that directly operates in the FT gate set and iteratively modifies a circuit, often
gate-by-gate until the target unitary is implemented.
These \emph{search-based} multi-qubit approaches generate more resource-efficient circuits than the previously described generic analytical methods.
Examples of such include simulated annealing~\cite{paradis_2024_synthetiq}, Reinforcement Learning approaches~\cite{zhang_2020_topological_compiling, moro_2021_drlcompiling, chen_2022_efficient, alam_2023_synthesis_mdp, rietsch_2024_unitary}, and several optimal and heuristic synthesis algorithms \cite{amy_2013_middle, gosset_2013_tcount, gheorghiu_2022_tcount}.
However, solving this problem for optimal non-Clifford gate counts is NP Hard~\cite{vandewetering_2024_hard}.

Empirically, these state-of-the-art methods find very efficient implementations of multi-qubit uni\-taries so long as solutions require few (meaning 10s of) gates.
Search-based tools are restricted in that they only operate with discrete gate sets, thus they have no direct way of handling continuous rotations such as $R_Z$ gates.
As a reference, synthesizing a single $R_Z$ gate to a precision of $\epsilon\leq10^{-8}$ requires about $200$ individual Clifford+T gates.
For unitaries containing these rotations, which are ubiquitous in unitaries taken from benchmarks of interest, search-based tools can only find low-precision implementations for a subset of inputs. 

Table~\ref{tab:rlcomparison} compares various unitary synthesis approaches which can be used to target the Clifford+T gate set.
Among search-based methods, the simulated annealing tool \emph{Synthetiq} empirically finds better implementations of unitaries than other methods~\cite{paradis_2024_synthetiq}.
Even so, Synthetiq fails to produce solutions for high precision implementations of complex unitaries as the space of circuits is too large to search.
\emph{Synthesizing complex unitaries to high precision requires a mechanism for handling continuous rot\-ations}; this is possible in both analytical and diagonalization approaches.

Consider the example illustrated in Figure~\ref{fig:strategies}.
In this case, we want to synthesize the unitary labeled \emph{Target}.
Using an analytical synthesis algorithm (e.g., QSD) results in an exponential number of $R_Z$ gates, which are then decomposed into many more T gates. Numerical-optimization methods find much more efficient circuits at the cost of compilation time but still result in far too many T gates. State-of-the-art search-based methods can produce optimal circuits for high-error approximations but have untenable run times as the target precision increases. Ultimately, this lack of precision limits their use in end-to-end compilation (Section \ref{sec:discussion}). Diagonalization combines both the resource-efficiency of search algorithms with the high precision available from analytical methods to practically generate high-quality approximate circuits during the compilation of a wide range of quantum algorithms to an FT gate set.


\begin{table}[t]
    \centering
    \footnotesize
    \begin{tabular}{|c|c|c|c|c|}
        \hline
        Method & Qubits & Category & Precision & Unitary Domain \\
        \hline
        \gridsynth{} \cite{ross_2016_gridsynth} & 1 & Analytical & - & Approx. $R_Z$ \\
        Policy Iter. \cite{alam_2023_synthesis_mdp} & 1 & Search (RL) & $10^{-2}$ & Approx. $\mathbb{U}(2)$ \\
        Synthetiq \cite{paradis_2024_synthetiq} & 1-4 & Search (SA) & $10^{-3}$ & Approx. $\mathbb{U}(2)$-$\mathbb{U}(16)$ \\
        \textbf{Diagonalization} (ours) & 1-3 & Search & $10^{-3}$ and below & Approx. $\cup$ Diagonalizable $\mathbb{U}(2)$-$\mathbb{U}(8)$ \\
        QSD \cite{shende_2006_qsd} & 1+ & Analytical & - & $U \in \unitarygroup$ \\
        \hline
    \end{tabular}
    \caption{
        Unitary synthesis approaches for the Clifford+T gate set.
        Among search-based methods, our \emph{diagonalizing} approach can synthesize targets to higher precision.
        In cases where unitaries can be exactly diagonalized, circuit implementations are produced to arbitrarily high precision.
        Precision here is measured using Hilbert-Schmidt distance (Eq~\ref{eqn:distance}).
        The \emph{Unitary Domain} column indicates what kinds of unitaries can be handled.
        \emph{Synthetiq} uses simulated annealing and empirically outperforms other pure search-based tools at low precision.
        \gridsynth{} optimally decomposes 1-qubit $R_Z$ unitaries to any precision.
        The \emph{QSD} is an analytical method which can be used along with \gridsynth{} to synthesize Clifford+T circuits.
    }
    \label{tab:rlcomparison}
\end{table}


\section{Formalizing Unitary Synthesis}
\label{sec:synthesis}
The unitary synthesis is typically framed as finding a circuit which inverts the adjoint (conjugate trans\-pose) of a target matrix.
To solve the problem directly by matrix inversion, every $n$-qubit circuit is described as a sequence of primitive gates taken from a finite set $\actionspace \subset \unitarygroup$.
Every primitive gate has an associated unitary $\ac \in \actionspace$.
A circuit consisting of $t$ gates is associated with a unitary matrix:
\begin{equation}
    \label{eqn:circuitunitary}
    \circuit = \action{t} \times \dots \times \action{1}.
\end{equation}
A circuit represented by $\circuit$ implements a target unitary $\target \in \unitarygroup$ when the distance condition
\begin{equation}
    \label{eqn:distance}
    d_{HS}(\circuit, \target) = \sqrt{1 - \frac{1}{4^n} |Tr(\circuit \target^\dagger)|^2 } \leq \epsilon
\end{equation}
is satisfied.
Here $\epsilon \ll 1$ is a hyperparameter which controls how precisely a target is implemented.

Synthesis methods that invert the adjoint of a target unitary using FT gate sets must find a (likely very long) sequence of discrete gates that satisfies Equation~\ref{eqn:distance}.
We propose an alternative approach: diagonalization (Figures~\ref{fig:strategies} and~\ref{fig:toffoli}).  The diagonalization process consists of a two-headed search that halts when the target's adjoint is diagonalized (i.e., not fully inverted).
Diagonalizing ensures that up to $2^n-1$ continuous $R_Z$ operations can be handled.  
Although there is no guarantee of optimality, this approach is able to find high-precision implementations of complex unitaries where search-based inversion methods fail (Figure~\ref{fig:ccry}), and uses significantly fewer resources than pure analytical methods (Table~\ref{tab:qsd_comparison}).
In the worst case, diagonalization acts exactly like the underlying search-based algorithm which implements it.
When both diagonalization and search-based methods fail, analytical rule-based methods may be necessary.

\begin{figure}[t]
    \centering
    \begin{subfigure}{0.3\textwidth}
        \centering
        \includegraphics[width=0.8\linewidth]{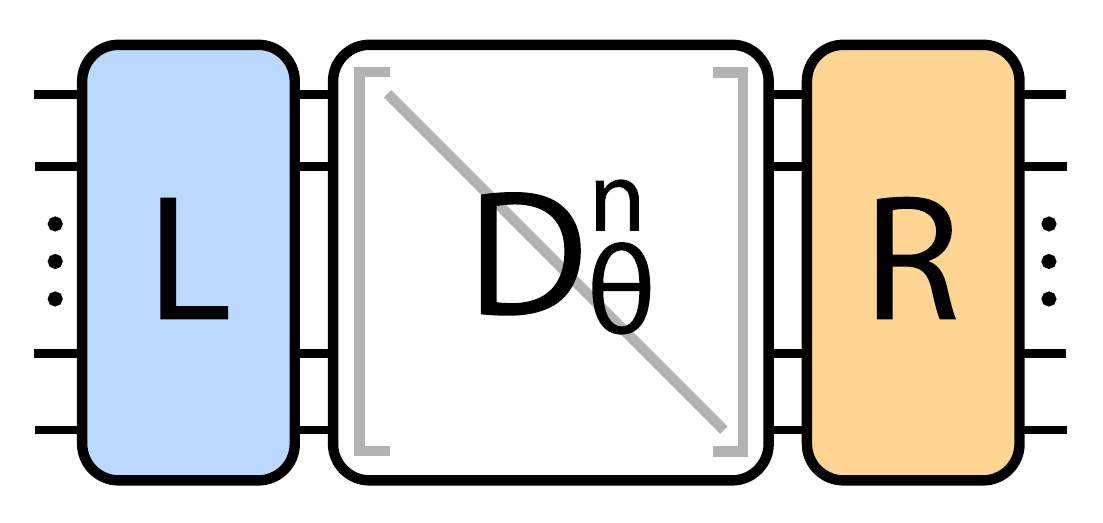}
        \caption{Synthesis-by-Diagonalization Ansatz}
        \label{fig:first}
    \end{subfigure}
    \hfill
    \begin{subfigure}{0.24\textwidth}
        \centering
        \includegraphics[width=\linewidth]{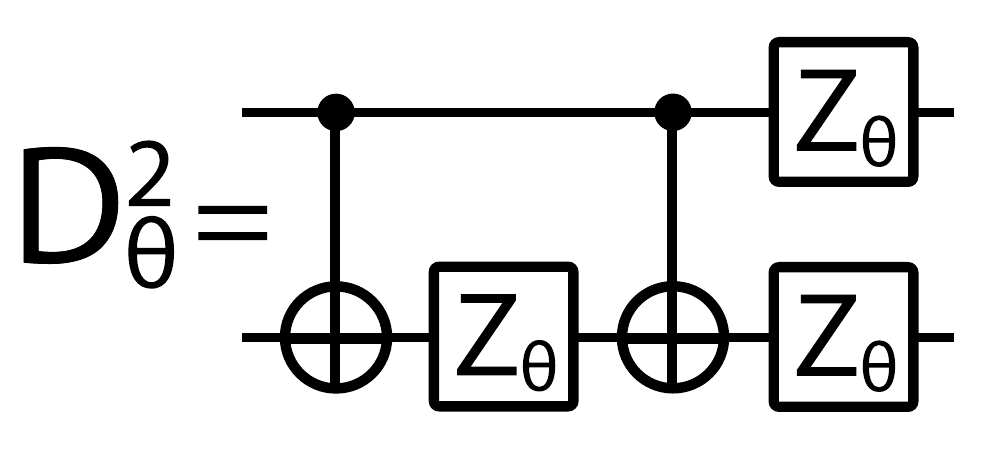}
        \caption{2 Qubit Diagonal Circuit}
        \label{fig:second}
    \end{subfigure}
    \hfill
    \begin{subfigure}{0.33\textwidth}
        \centering
        \includegraphics[width=\linewidth]{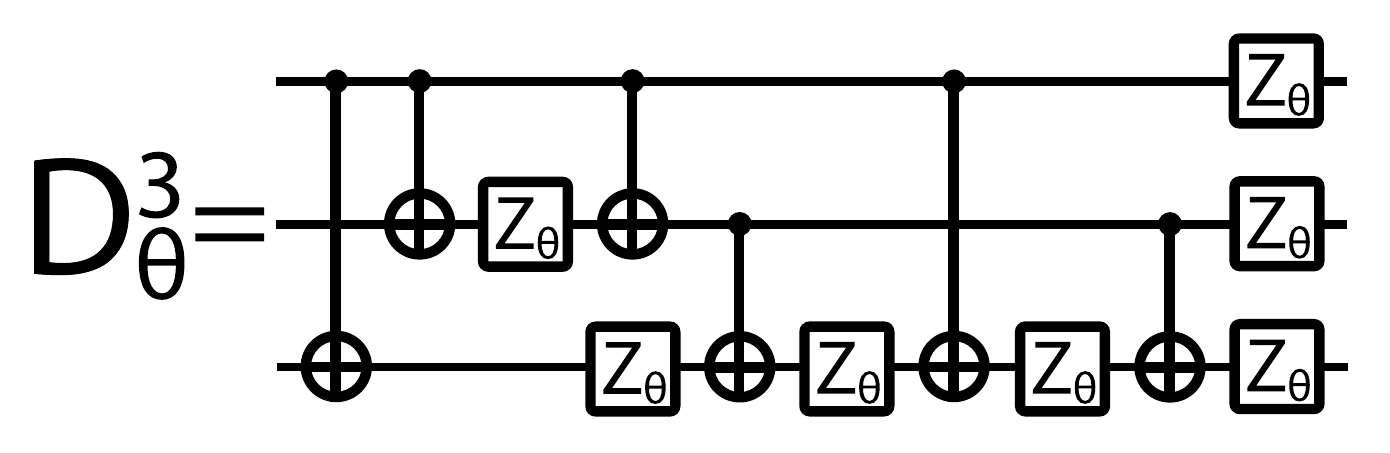}
        \caption{3 Qubit Diagonal Circuit}
        \label{fig:third}
    \end{subfigure}
    \vskip\baselineskip
    \begin{subfigure}{0.65\textwidth}
        \centering
        \includegraphics[width=\linewidth]{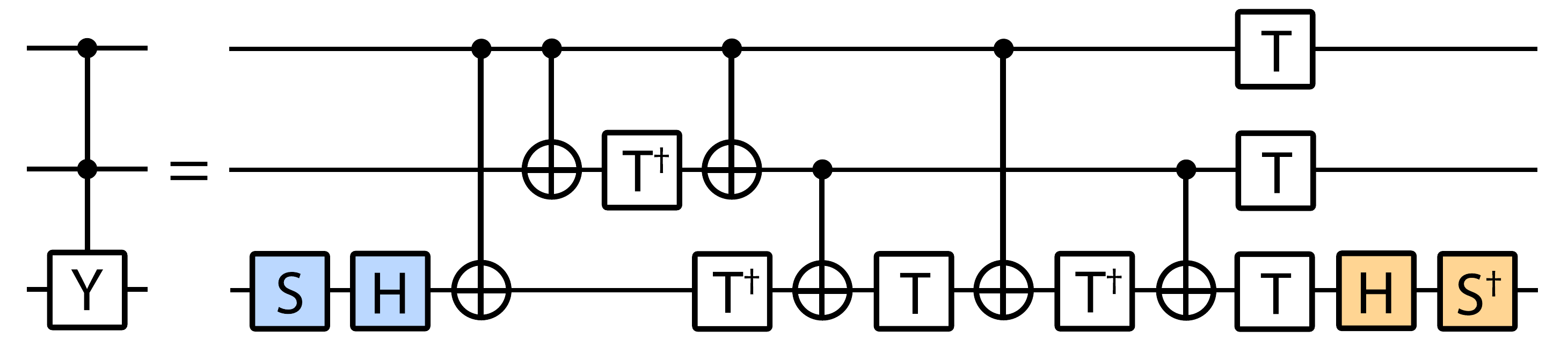}
        \caption{CCY synthesized by diagonalization}
        \label{fig:fourth}
    \end{subfigure}
    \caption{
        Diagonalization ansatz and examples.
        The diagonal matrix $D_\theta$ captures up to $2^n$-$1$ continuous degrees of freedom which we implement with $R_Z(\theta)$ and CNOT gates.
        The $L$ and $R$ subcircuits are made of discrete Clifford+T operations.
        (b-c) Show how to construct 2 and 3 qubit diagonal circuits.
        In many cases, not all the $R_Z$ gates in these circuits are needed.
        (d) A CCY gate only requires two $H$ gates, an $S$, and an $S^\dagger$ to be diagonalized.
        A Toffoli is realized if the $S$ gates are removed.
        Its diagonalization is implemented with only CNOT and $R_Z(\pm\frac{\pi}{4})$ ($T$ or $T^\dagger$ gates).
    }
    \label{fig:toffoli}
\end{figure}

\subsection{Synthesis as a Markov Decision Process}
\label{sec:rlsynth}
\label{sec:mdps}
Markov Decision Processes (MDPs) provide a framework for describing stochastic decision problems.
Posing quantum circuit synthesis as an MDP has been demonstrated before \cite{zhang_2020_topological_compiling, moro_2021_drlcompiling, chen_2022_efficient, alam_2023_synthesis_mdp, rietsch_2024_unitary}.
We define the MDP as a tuple $(\statespace, \initialstates, \terminalstates, \actionspace, r)$.
Here $\statespace$ is a set of states, $\initialstates$ a set of initial states, $\terminalstates$ a set of terminal states, $\actionspace$ a set of actions or gates, and $r: \statespace \times \actionspace \rightarrow \realnum$ a reward function indicating when synthesis is done.
We use the term state (or sometimes unitary state) to describe an MDP state $s_t$, not a quantum mechanical state vector or density operator.

\section{Diagonalization}
\label{sec:diagonalization}

Directly synthesizing a unitary by finding a sequence of gates that inverts it is intractable when high precision is needed.
This is because many discrete gates are required to implement a single continuous rotation.
We alleviate the demands placed on search-based synthesis algorithms by using them to diagonalize rather than directly invert unitaries.
Any search-based synthesis algorithm can be used to diagonalize unitaries.
To fit the paradigm of search-based synthesis as described in Section~\ref{sec:synthesis}, we first define the problem of diagonalization as an MDP.
At time $t \in \{0, 1,\dots, T\}$ in the synthesis process, the state $\mdpstate{t} \in \statespace$ is defined by 
\begin{equation}
    \label{eqn:state}
    \mdpstate{t} = L_t \target^\dagger R_t
\end{equation}
where $L_t, R_t$ are each sequences of discrete operations (as in Equation~\ref{eqn:circuitunitary}).
The set of initial states contains the adjoint of target unitaries $S_I = \{ \target^\dagger \}$.
The set of terminal states is the set of all states which can be approximately inverted by a diagonal unitary matrix.
This corresponds to the set of states satisfying
\begin{equation}
    \label{eqn:diagdistance}
    d_{D}(\mdpstate{T}) = d_{HS}\big(\mdpstate{T}, D_\theta\big) \leq \epsilon
\end{equation}
where $D_\theta$ is a diagonal unitary, and $\theta$ is some vector of real rotation angles.
It is sufficient that any $\theta \in \realnum^{2^n-1}$ exists that satisfies this inequality for $\mdpstate{T}$ to be considered a terminal state. Given a terminal state $s_T$, the corresponding circuit is
\begin{equation}
    C_\theta = R_T s_T^\dagger L_T = R_T D_\theta^{-1} L_T.
\end{equation}
Figure~\ref{fig:toffoli} illustrates the general form of three-qubit circuits which satisfy this structure.
As an example, the CCY gate can be diagonalized by just four Clifford gates (2 $H$ gates, an $S$ and an $S^\dagger$ gate).

Synthesis-by-diagonalization can be considered a Singular Value Decomposition, where the left and right singular vector matrices are restricted to unitaries which can be implemented exactly by a discrete gate set (e.g., Clifford+T).
In some cases, diagonalization reduces to inversion (meaning $L_T = D_\theta = I$). 
This implies that any circuit which can be efficiently synthesized by inversion can also be synthesized by diagonalization.
\emph{Given the same search algorithm, the set of unitaries that diagonalization is able to synthesize is a strict superset of the unitaries that inversion can synthesize.}

\subsection{The Diagonal Distance}
Our goal is to determine when a state $s_t$ can be nearly inverted.
This means we can satisfy the distance condition in Equation~\ref{eqn:diagdistance} after multiplying by a diagonal unitary.
\begin{theorem}
    A unitary satisfying $\max_{i\in [2^n]} \sqrt{\sum_{j \neq i}|u_{ij}|^2} \leq \epsilon$, where $u_{ij}$ are the unitary's elements, is at most a Hilbert-Schmidt distance of $\epsilon$ away from the identity when multiplied by some diagonal unitary.
\end{theorem}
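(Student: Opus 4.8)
The plan is to construct the required diagonal unitary explicitly and then bound the Hilbert--Schmidt distance of Equation~\ref{eqn:distance} directly, rather than trying to make the product close to $I$ entrywise. Write $N = 2^n$ and let $u_{ij}$ denote the entries of the unitary $U$. First I would use unitarity: every row of $U$ has unit norm, so $\sum_{j}|u_{ij}|^2 = 1$ for each $i$. Combined with the hypothesis $\sum_{j\neq i}|u_{ij}|^2 \le \epsilon^2$, this forces $|u_{ii}|^2 \ge 1 - \epsilon^2$ for every diagonal entry, and in particular $u_{ii}\neq 0$ in the regime $\epsilon < 1$ of interest.

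Next I would take the diagonal unitary $D$ with $D_{ii} = e^{-i\arg u_{ii}}$ (and $D_{ii}$ arbitrary if $u_{ii}=0$), so that each diagonal entry of $DU$ is the nonnegative real number $|u_{ii}|$. Then $Tr(DU) = \sum_i |u_{ii}|$ is itself real and nonnegative, and by the lower bound above $Tr(DU) \ge N\sqrt{1-\epsilon^2}$, hence $|Tr(DU)|^2 \ge N^2(1-\epsilon^2)$. Substituting into the definition of $d_{HS}$ with target the identity gives
\[
 d_{HS}(DU, I)^2 \;=\; 1 - \frac{1}{N^2}\,\bigl|Tr(DU)\bigr|^2 \;\le\; 1 - (1-\epsilon^2) \;=\; \epsilon^2 ,
\]
so $d_{HS}(DU,I) \le \epsilon$, which is exactly the assertion. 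The same computation with $UD$ in place of $DU$ shows the side on which the diagonal correction is applied does not matter.

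I do not expect a genuine obstacle here: the only subtlety is picking the right target for the argument. One should \emph{not} attempt to make $DU$ close to $I$ coordinate by coordinate, since the off-diagonal entries of $U$ are controlled only in norm and not in phase; it is enough to rotate the diagonal entries onto the nonnegative real axis, because the Hilbert--Schmidt fidelity $|Tr(\cdot)|^2/N^2$ depends only on the trace. The remaining ingredients are the row-normalization identity from unitarity and the elementary bound $\bigl(\sum_i |u_{ii}|\bigr)^2 \ge N^2(1-\epsilon^2)$, both of which are immediate.
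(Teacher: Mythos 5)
Your proposal is correct and follows essentially the same route as the paper: choose the diagonal unitary that rotates each diagonal entry onto the nonnegative reals, use row-normalization from unitarity to get $|u_{ii}|\ge\sqrt{1-\epsilon^2}$, and bound $|Tr(DU)|^2\ge 4^n(1-\epsilon^2)$ inside the Hilbert--Schmidt formula. If anything, your write-up is cleaner --- the paper's displayed inequality has its direction flipped (it writes $\le\sqrt{4^n(1-\epsilon^2)}$ where the argument needs $\ge$) and cites the hypothesis as $\sum_{j\ne i}|u_{ij}|^2\le\epsilon$ rather than $\le\epsilon^2$, both of which you state correctly.
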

\begin{proof}
    \noindent 
    Say the state of the synthesis process is $s_t$.
    Each row of $s_t$ is of the form
    \[
        s_t[i] = \begin{bmatrix} u_{i1} & \dots & u_{ii} & \dots & u_{i2^n} \end{bmatrix}.
    \]
    Consider a diagonal unitary matrix $D$ such that for all $i$, $d_{ii} u_{ii} = |u_{ii}|$.
    It then follows that $tr(D \times s_t) = \sum_{i} |u_{ii}|$.
    By the unitarity of $s_t$, we know $|u_{ii}|^2 = 1-\sum_{j\neq i}|u_{ij}|^2$.
    So
    \[
        tr(D \times s_t) = \sum_i |u_{ii}| = \sum_i \sqrt{1 - \sum_{j \neq i} |u_{ij}|^2} \leq \sqrt{4^n (1 - \epsilon^2)}
    \]
    since we assumed that $\sum_{j \neq i} |u_{ij}|^2 \leq \epsilon$.
    This satisfies the Hilbert-Schmidt distance condition (Eq~\ref{eqn:distance}) because
    \[
        d_{HS}(D \times s_t, I) = \sqrt{1 - \frac{1}{4^n}|tr(D \times s_t)|^2} \leq \sqrt{1 - \frac{1}{4^n}|\sqrt{4^n (1-\epsilon^2)}|^2} = \epsilon
    \]
\end{proof}

\subsection{Synthesizing Diagonal Unitaries}
The diagonal operator $D_\theta$ captures continuous degrees of freedom that cannot be efficiently handled by the $L_t$ and $R_t$ unitaries.
We prepare an ansatz implementing $D_\theta$ using specialized algorithms for synthesizing diagonal unitaries \cite{bullock_2004_diagonal}.
Figure~\ref{fig:toffoli} illustrates the circuit structure of generic diagonal unitaries for the 2 and 3 qubit case using only $CNOT$ and $R_Z$ gates.
We chose to implement $R_Z$ rotations in the Clifford+T gate set using \gridsynth{} \cite{ross_2016_gridsynth}, but alternative techniques can also be used.
Some $R_Z$ gates appearing in these ansatzes may not be used in all cases.

\section{Experiments}
\label{sec:experiments}
In this section, we compare synthesis-by-diagonalization with two synthesis-by-inversion algorithms.
The first is Synthetiq~\cite{paradis_2024_synthetiq}, a simulated annealing search-based synthesis algorithm.
The second is the Quantum Shannon Decomposition (QSD)~\cite{shende_2006_qsd}, an analytical synthesis algorithm.

\subsection{Synthesis of Controlled Rotations}
\label{sec:vs_synthetiq}
To illustrate the advantages of diagonalization compared to direct inversion with search-based methods, we use both to synthesize controlled rotations.
These primitive gates are ubiquitous in quantum algo\-rithms, including Shor's algorithm~\cite{shor_1997_shorsalgorithm} and Hamiltonian simulation circuits~\cite{li_2022_paulihedral}.
Specifically, we target 100 different random angle $CR_Y(\theta)$ and $CCR_Y(\theta)$ unitaries.
These circuits look similar to Figure~\ref{fig:toffoli}d, but with $R_Z(\theta)$ gates instead of $T$ gates. 

We compare simulated annealing based diagonalization to direct inversion in Figure~\ref{fig:ccry}.
We modified the simulated annealing synthesis tool, Synthetiq~\cite{paradis_2024_synthetiq}, so that it can perform synthesis by diagonalization.
As mentioned in Section~\ref{sec:diagonalization}, diagonalization is strictly more powerful than inversion; anything which Synthetiq can synthesize by inversion, it can synthesize by diagonalization. 

\enlargethispage{-\baselineskip}

\begin{figure}[bh]
    \centering
    \centering
    \begin{subfigure}{0.49\textwidth}
        \centering
        \includegraphics[width=\linewidth]{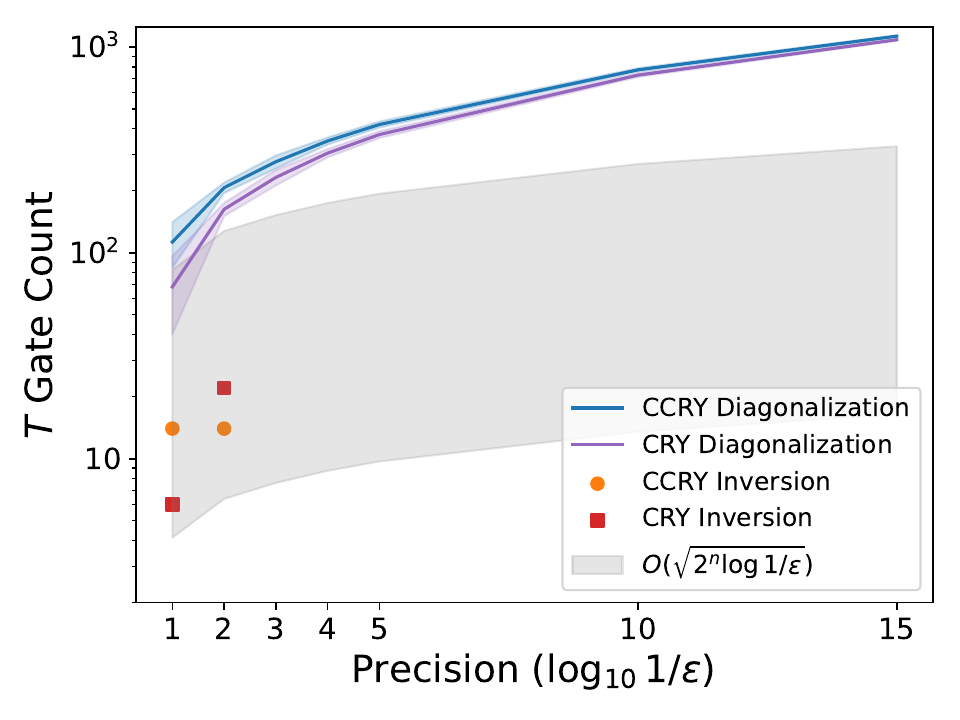}
    \end{subfigure}
    \begin{subfigure}{0.49\textwidth}
        \centering
        \includegraphics[width=\linewidth]{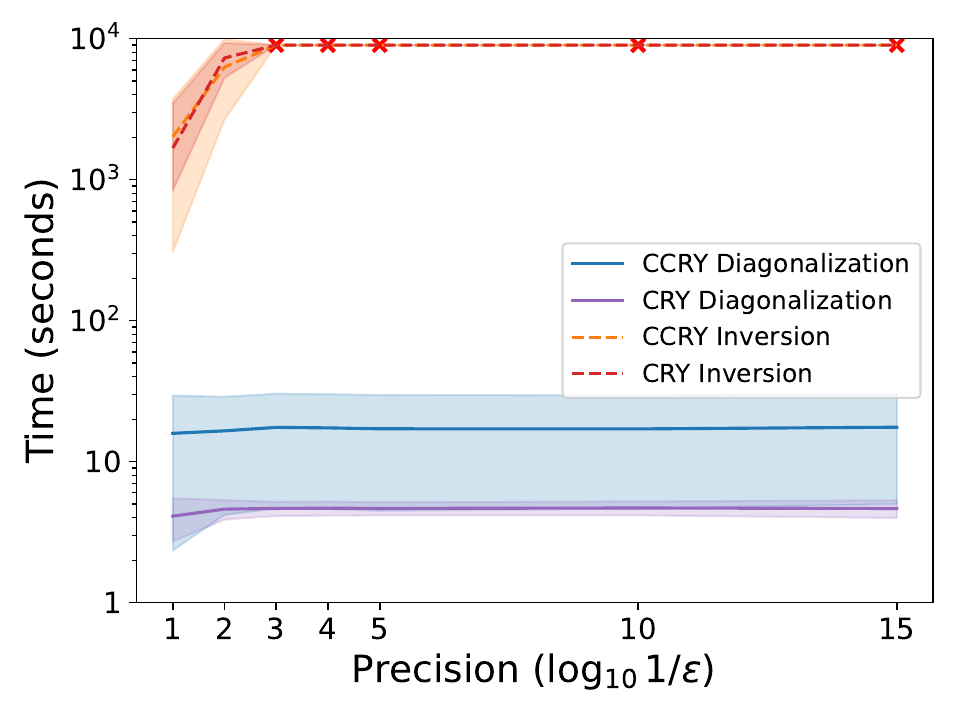}
    \end{subfigure}
    \caption{
        T gate count and run time when synthesizing $CCR_Y$ and $CR_Y$ unitaries with direct-inversion and diagonalization using Synthetiq~\cite{paradis_2024_synthetiq}.
        At low precision, direct-inversion with search-based synthesis can produce optimal results.
        As precision increases beyond $10^{-2}$, diagonalization finds solutions whereas inversion does not.
        The grey area indicates different constant value scalings of the optimal T count for diagonal unitaries~\cite{gosset_2024_diagonaltcount}.
        Improvements in diagonal synthesis algorithms would allow diagonalization to achieve these bounds.
        Diagonalization takes less than 20 seconds across all precisions, inversion times out after 2.5 hours (red ``x" marks indicate time outs).
        The success rate of inversion is 4\% compared to 100\% for diagonalization.
    }
    \label{fig:ccry}
\end{figure}

Low T count implementations of some unitaries can be found for very low precision levels when directly inverting unitaries.
Given a timeout period of 2.5 hours, and running on 64 AMD EPYC 7720P CPU physical cores, the direct inversion method of synthesis is only able to find solutions in 4/100 cases when $\epsilon=10^{-2}$ for both the $CCR_Y$ and $CR_Y$ unitaries.
The diagonalization approach produces implementations that contain more $T$ gates, but its run time does not depend strongly on the target precision, and it finds solutions for all 100 angles tested.
Improvements in synthesis techniques for diagonal operators (see Equation~\ref{eqn:diagtcount}) promise to lower T-counts further.

This experiment highlights how pure search-based methods are largely incapable of synthesizing unitaries which have continuous degrees of freedom.
Search-based synthesis algorithms find resource efficient implementations of unitary matrices at low precisions, but fail when higher precision is needed.

\subsection{Synthesis of Complex Unitaries from Quantum Algorithms}
\label{sec:vs_qsd}

Diagonalization works well for controlled rotations, but also produces high precision implementations of more complex unitaries which appear in quantum algorithms.
In this regime, search-based methods are hopeless (see Figure~\ref{fig:ccry}).
We therefore compare to the Quantum Shannon Decomposition (QSD) \cite{shende_2006_qsd}.

In addition to the simulated annealing diagonalizer mentioned in Section~\ref{sec:vs_synthetiq}, we also trained Re\-inforcement Learning (RL) agents capable of diagonalizing two- and three-qubit unitaries.
We did this because we found inference-based search to be $\approx 1000 \times$ faster than simulated annealing.
Search-based synthesis with RL has been demonstrated before~\cite{zhang_2020_topological_compiling, moro_2021_drlcompiling, chen_2022_efficient, alam_2023_synthesis_mdp, rietsch_2024_unitary}.
To train the diagonalizing agent, we randomly generated separate $A$ and $B$ Clifford+T circuits, then inserted random $D_\theta$ operators to form targets in the form $\target{} = A \times D_\theta \times B$.
Each $A$ and $B$ circuits contained up to 20 random Clifford+T gates.

\begin{table}[!b]
    \tiny
    \fontsize{6pt}{7pt}
    \centering
    \begin{tabular}{|c|c|c|c|c|c|c|c|c|c||c|}
    \hline
    \multirow{13}{*}{\rotatebox{90}{2 Qubits}} & & Heisenberg & HHL & Hubbard & QAOA & QPE & Shor & TFIM & VQE & Mean \\
    \hline
    & Time per & 0.98 & 0.96 & 0.97 & 0.98 & 0.97 & 0.96 & 0.98 & 0.96 & 0.97 \\
    & Unitary (s) & 0.93 & 0.63 & 0.92 & 0.86 & 0.59 & 0.73 & 0.52 & 0.87 & 0.76 \\
    \cline{2-11}
    & Success Rate & 100\% & 100\% & 100\% & 100\% & 100\% & 100\% & 100\% & 100\% & 100\% \\
    & & 93.7\% & 37.4\% & 42.8\% & 27.1\% & 29.1\% & 41.7\%  & 51.0\% & 20.0\% & 42.9\% \\
    \cline{2-11}
    & $R_Z$ Count & 7.07 & 7.58 & 6.34 & 8.27 & 7.88 & 7.31 & 6.15 & 4.88 & 6.94 \\
    & & 1.0 & 1.64 & 0.85 & 1.06 & 0.91 & 1.56 & 1.0 & 0.94 & 1.12 \\
    \cline{2-11}
    & T Count & 0.28 (495.2) & 0.29 (530.9) & 0.04 (443.8) & 0.42 (579.3) & 0.22 (551.8) & 0.15 (511.9) & 0.0 (430.5) & 0.02 (341.6) & 0.18 (486.0) \\
    & & 0.03 (70.0) & 0.36 (115.2) & 0.0 (59.5) & 0.0 (74.2) & 1.15 (64.9) & 0.1 (109.3) & 0.0 (70.0) & 0.16 (66.0) & 0.23 (78.6) \\ 
    \cline{2-11}
    & Clifford Count & 4.66 & 4.20 & 4.63 & 5.15 & 4.35 & 5.77 & 2.06 & 3.33 & 4.27 \\
    & & 4.39 & 4.01 & 3.16 & 2.10 & 3.50 & 4.07 & 5.47 & 8.31 & 4.38 \\
    \cline{2-11}
    & T Reduction & \bf{85.9\%} & \bf{78.4\%} & \bf{86.6\%} & \bf{87.2\%} & \bf{88.5\%} & \bf{78.7\%} & \bf{83.7\%} & \bf{80.7\%} & \bf{83.5\%} \\
    \cline{2-11}
    \hline
    \hline
    \cline{2-11}
    \multirow{12}{*}{\rotatebox{90}{3 Qubits}} & Time per & 1.07 & 1.06 & 1.06 & 1.18 & 1.07 & 1.18 & 1.07 & 1.05 & 1.09 \\
    & Unitary (s) & 1.03 & 1.07 & 1.12 & 1.02 & 1.06 & 1.12 & 1.01 & 0.98 & 1.05 \\
    \cline{2-11}
    & Success Rate & 100\% & 100\% & 100\% & 100\% & 100\% & 100\% & 100\% & 100\% & 100\% \\
    & & 27.4\% & 9.6\% & 55.6\% & 11.2\% & 78.5\% & 18.0\% & 9.4\% & 13.3\% & 27.9\% \\
    \cline{2-11}
    & $R_Z$ Count & 44.60 & 30.42 & 36.73 & 30.96 & 46.89 & 43.67 & 34.63 & 43.57 & 38.93 \\
    & & 1.88 & 2.06 & 0.87 & 2.27 & 1.48 & 4.21 & 1.90 & 0.72 & 1.92 \\
    \cline{2-11}
    & T Count & 1.46 (3124) & 0.68 (2130) & 1.85 (2573) & 1.22 (2168) & 1.26 (3284) & 1.8 (3059) & 1.1 (2425) & 1.84 (3052) & 1.4 (2727) \\
    & & 0.1 (132) & 2.15 (146) & 0.47 (61) & 0.0 (159) & 3.54 (107) & 0.28 (295) & 0.17 (133) & 0.43 (51) & 0.89 (135) \\
    \cline{2-11}
    & Clifford Count & 33.85 & 29.24 & 34.48 & 30.06 & 35.41 & 37.95 & 29.95 & 34.96 & 33.24 \\
    & & 13.06 & 15.94 & 14.85 & 10.22 & 14.00 & 11.92 & 16.76 & 16.69 & 14.18 \\
    \cline{2-11}
    & T Reduction & \bf{95.8\%} & \bf{93.2\%} & \bf{97.6\%} & \bf{92.7\%} & \bf{96.8\%} & \bf{90.4\%} & \bf{94.5\%} & \bf{98.3\%} & \bf{95.1\%} \\
    \cline{2-11}
    \hline
    \end{tabular}
    \caption{
        Synthesis of 2- and 3-qubit unitaries taken from partitioned circuits (see Figure~\ref{fig:transpilation} for an illustration).
        Numbers indicate average time, success rate, Clifford, and non-Clifford gate counts for unitaries taken from a variety of benchmarks.
        To avoid bias, gate counts are only reported for trials where both methods succeed.
        T counts are reported so that the number of T gates introduced by the search based synthesis algorithm appear first, followed by that number plus the number of T gates due to compiling $R_Z$ gates into Clifford+T.
        Compared to the QSD (top rows), diagonalization (bottom rows) on average reduces the number of $R_Z$ gates by 83.5\% (95.1\%) for 2- (3-)qubit unitaries. 
        The average reduction is computed as $(T_{QSD} - T_{Diag})/T_{QSD}$.
        Comparisons are made to the QSD because other synthesis tools fail to find solutions given $\epsilon < 10^{-3}$.
    }
    \label{tab:qsd_comparison}
\end{table}

Throughout these experiments, we require that unitaries be synthesized to a distance of $\epsilon = 10^{-6}$ (see Equation~\ref{eqn:distance}).
Higher precision can be attained in most cases (see Section~\ref{sec:discussion}).
Individual $R_Z$ gates are synthesized using \gridsynth{}.
Each rotation is synthesized to a distance of $\epsilon=10^{-7}$.
In the three-qubit diagonalization case, this means that the total Hilbert-Schmidt distance due to $R_Z$ approximation is at most $7\times10^{-7}$~\cite{wang_1994_trace_inequality}.
QSD results are optimized by simplifying gate sequences and replacing $R_Z$ gates with Clifford+T gates when possible.

We evaluated the diagonalizing agent's performance on a set of unitaries taken from partitioned quantum algorithms (see Figure~\ref{fig:transpilation} for an illustration).
This suite of algorithms includes Shor's Algorithm \cite{shor_1997_shorsalgorithm}, TFIM, Heisenberg, and Hubbard model quantum chemistry simulation circuits \cite{bassman_arqtic_2021}, trained VQE \cite{peruzzo_2014_vqe} and QAOA \cite{farhi_2014_qaoa} circuits, and QPE circuits \cite{kitaev_1995_qpe}.
The VQE and QAOA circuits were generated by MQTBench \cite{quetschlich_2023_mqtbench}.
Each set of unitaries was filtered to ensure that every unitary was unique.
There were 22,323 different two-qubit unitaries and 45,202 different three-qubit unitaries.
Table~\ref{tab:qsd_comparison} summarizes.

We find that in the two-qubit case, diagonalization can find circuits implementing 42.9\% of unitaries across all benchmarks.
About 79\% of these unitaries are already diagonal, and therefore trivial for a diagonalizing agent to synthesize.
Most other unitaries contain 2-8 gates, typically H and S gates.
For the three-qubit case, realistic unitaries are far more complex: they are more likely to contain unitaries which do not conform to the diagonalization ansatz (see Figure~\ref{fig:ccry}a).
Approximately 27.9\% of three-qubit unitaries from our suite of partitioned circuits could be synthesized by the diagonalizing agent.
Of these, approximately 57\% were already diagonal.
On average, the diagonal operators contained $\approx4 \ R_Z$ gates.
The $L_t (\cdot) R_t$ circuits contained an average of about 11 Clifford+T gates.

Compared to the QSD, diagonalization produces solutions with far fewer non-trivial rotation gates.
Implementing an $R_Z$ gate requires $O(\log\frac{1}{\epsilon})$ T gates when synthesized optimally with \gridsynth{} \cite{ross_2016_gridsynth}. 
For $\epsilon=10^{-7}$ this is approximately 70 T gates per $R_Z$.
Lone T gates are almost negligible compared to the contributions from continuous rotations in the high precision regime.
Compared to the QSD, the diagonalizing synthesizer reduces the average number of non-Clifford gates by $83.5\%$ for two qubit unitaries and $95.1\%$ for three qubit unitaries.
As the number of qubits $n$ increases, we expect that diagonalization will outperform the QSD because the former uses at most $2^n-1$ rotations, while the latter typically uses $O(4^n)$.
The advantage of diagonalization compared to the QSD grows as $O(2^n)$.

Although diagonalization does not always succeed, the potential savings in non-Clifford gates and the speed with the process runs remain strong arguments for its utility.
Unitaries which are (nearly) diagonal are ubiquitous primitives in realistic quantum benchmarks.
These common unitaries are well suited for synthesis-by-diagonalization, but entirely out of reach for inversion-based synthesis algorithms.

\begin{figure}[h]
    \centering
    \includegraphics[width=0.55\linewidth]{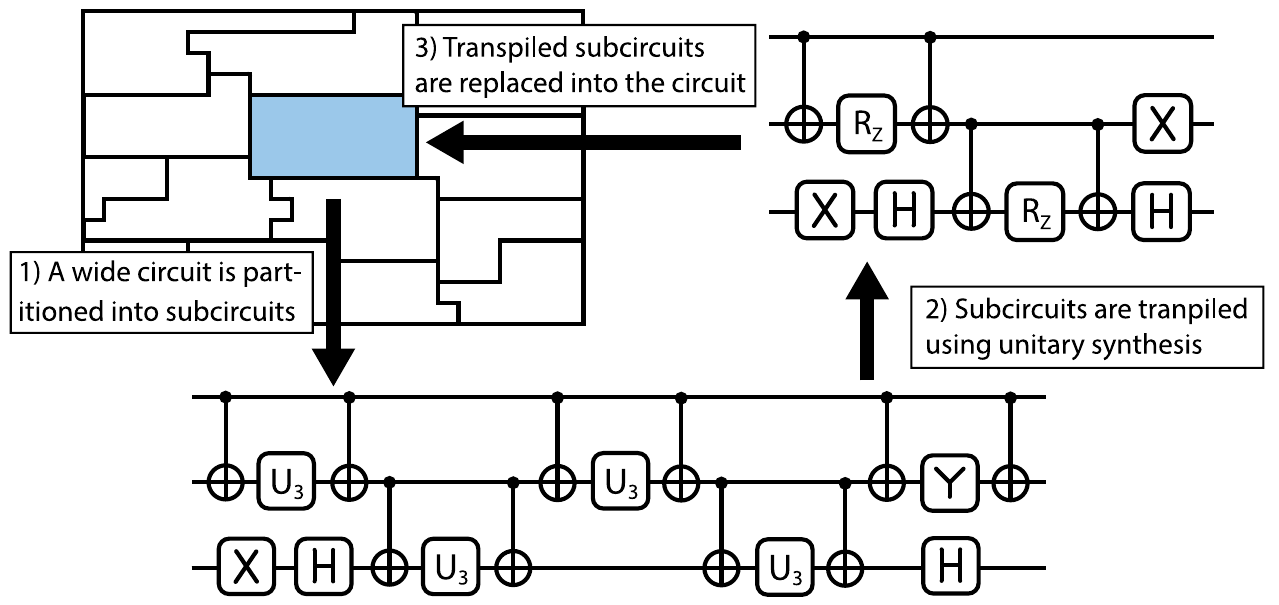}
    \caption{
        Fault-Tolerant gate set transpilation using unitary synthesis. Quantum algorithms are partitioned into many subcircuits. These subcircuits are transpiled independently using unitary synthesis. The optimized and transpiled subcircuits are then replaced into the original circuit. $R_Z$ gates are handled in a post-processing stage using \gridsynth{}.
    }
    \label{fig:transpilation}
\end{figure}

\subsection{End-to-End Circuit Compilation Workflow}
\label{sec:workflow}
Unitary synthesis algorithms are well suited for transpiling quantum circuits into new sets of gates.
In this setting, we assume access to a quantum circuit, not just a unitary matrix.
Past work targeting NISQ gate sets has demonstrated how multi-qubit synthesis can transpile circuits using fewer gates than when using gate-by-gate replacement rules~\cite{younis_2022_transpilation}.
This added optimization power comes from replacing gate-level local translations with more globally aware discovered replacements.
Here we consider whether the ability to handle more complex unitaries with diagonalization yields similar results in FT transpilation.

We consider a \emph{gate-level} transpilation strategy as a control; circuits are transpiled gate-by-gate (as opposed to subcircuit-by-subcircuit) into the Clifford+T gate set.
We compare to this strategy because we already have a gate-level implementation of the algorithm, and the only other method that produces high precision implementations of these unitaries, the QSD, results in an explosion of T gates (Table~\ref{tab:qsd_comparison}).

A summary of the process used is shown in Figure~\ref{fig:transpilation}.
Our method is:
\vspace{-5pt}
\begin{enumerate}[wide, labelindent=1pt]
  \setlength{\itemsep}{-3pt}
  \item Partition a quantum circuit into 2- or 3-qubit blocks that contain as many gates as possible.
  \item For each 2- or 3-qubit unitary, use diagonalization to synthesize a high-precision approximation.
  At the same time, use replacement rules and \gridsynth{} to transpile each partition into Clifford+T gates.
  \item If diagonalization fails or produces results with more T gates, use the gate-level transpilation results.
\end{enumerate}
This process ensures that using multi-qubit synthesis for transpilation never performs worse than the gate-level transpilation strategy.

\renewcommand{\arraystretch}{1.2}
\begin{table}[!b]
    \centering
    \footnotesize
    \begin{tabular}{|c|c|c|c|c|c|c|c|c|}
    \hline
    \multirow{7.5}{*}{\rotatebox{90}{By Gate}} & & Add & HHL & Mult & QAE & QFT & QPE & Shor \\
     & Qubits & 17 & 6 & 16 & 50 & 32 & 30 & 16 \\
    \hline
    \cline{2-9}
    \cline{2-9}
    & $R_Z$ Gates & 251 & 243 & 1,079 & 715 & 1,080 & 321 & 816 \\
    \cline{2-9}
    & $T$ Gates & 23,562 & 22,571 & 100,678 & 65,779 & 101,480 & 29,296 & 74,797 \\
    \cline{2-9}
    \cline{2-9}
    & $\epsilon_{\text{total}}$ & $2.5\times 10^{-7}$ & $2.4\times 10^{-7}$ & $1.1\times 10^{-6}$ & $6.7\times 10^{-7}$ & $1.1\times 10^{-6}$ & $3.2\times 10^{-7}$ & $8.2\times 10^{-7}$\\
    \cline{2-9}
    \hline
    \hline
    \cline{2-9}
    & $R_Z$ Gates & 240 & 241 & 1,079 & 683 & 1,080 & 310 & 816 \\
    \cline{2-9}
    \multirow{3}{*}{\rotatebox{90}{2Q Blocks}} & $T$ Gates & 21,924 & 22,201 & 98,516 & 62,405 & 98,640 & 28,290 & 74,790 \\
    \cline{2-9}
    \cline{2-9}
    & $\epsilon_{\text{total}}$ & $4.5\times 10^{-7}$ & $2.9\times 10^{-7}$ & $1.2\times 10^{-6}$ & $4.8\times 10^{-6}$ & $4.4\times 10^{-6}$ & $1.1\times 10^{-6}$ & $8.2\times 10^{-7}$\\
    \cline{2-9}
    & \% Diagonalized & 94.0\% & 72.4\% & 92.7\% & 45.1\% & 91.0\% & 21.4\% & 26.2\% \\
    \cline{2-9}
    & Improvement & $7.0\%$ & {1.6\%} & $2.1\%$ & {5.1\%} & $2.8\%$ & {3.4\%} & $0.0\%$ \\
    \cline{2-9}
    \hline
    \hline
    \cline{2-9}
    \cline{2-9}
    & $R_Z$ Gates & 216 & 240 & 898 & 686 & 949 & 316 & 816 \\
    \cline{2-9}
    \multirow{3}{*}{\rotatebox{90}{3Q Blocks}} & $T$ Gates & 19,912 & 22,265 & 82,500 & 62,667 & 87,248 & 28,836 & 74,720 \\
    \cline{2-9}
    \cline{2-9}
    & $\epsilon_{\text{total}}$ & $4.2\times 10^{-7}$ & $2.8\times 10^{-7}$ & $1.2\times 10^{-6}$ & $4.2\times 10^{-6}$ & $3.1\times 10^{-6}$ & $9.7\times 10^{-7}$ & $8.2\times 10^{-7}$\\
    \cline{2-9}
    & \% Diagonalized & 85.1\% & 14.3\% & 73.5\% & 40.6\% & 85.3\% & 14.4\% & 12.5\% \\
    \cline{2-9}
    & Improvement & {15.5\%} & $1.4\%$ & {18.1\%} & $4.7\%$ & {14.0\%} & $1.6\%$ & {0.1\%} \\
    \cline{2-9}
    \hline
    \end{tabular}
    \caption{
        Transpilation results. The \emph{By Gate} strategy indicates gate-level transpilation into Clifford+T gates.
        The \emph{2Q} and \emph{3Q Block} strategies indicate circuits partitioned into subcircuits of that size and synthesized by diagonalization.
        In scenarios where diagonalization fails, subcircuits are transpiled gate-by-gate.
        Each subcircuit is synthesized to a distance of $\epsilon=10^{-8}$.
        Individual $R_Z$ gates are synthesized to $\epsilon = 10^{-9}$, resulting in at least 90 $T$ gates per $R_Z$.
        The total approximation error in each circuit on the order of $\epsilon_{\text{total}} \approx 10^{-6}$.
        Reported \emph{Improvement} is percent decrease in T gate counts.
    }
    \label{tab:transpilation_comparison}
\end{table}

Table \ref{tab:transpilation_comparison} summarizes across several quantum algorithms and primitives.
The total approximation error across the entire circuit is about $\epsilon_{\text{total}} \approx 10^{-6}$.
This value is an upper bound that is found by summing the individual approximation errors for each partition and each \gridsynth{} transpiled $R_Z$ gate \cite{wang_1994_trace_inequality}.
Transpilation via diagonalization results in T gate savings for these algorithms compared to gate-level transpilation.
We see the highest T gate count reduction (18.1\%) for the 16 qubit multiplier circuit. 
Other algorithmic primitives such as the 17 qubit adder circuit, and the 32 qubit approximate QFT see similarly large reductions.
The HHL, QAE, and QPE circuits see more moderate decreases in T count.

How well diagonalization transpiles circuits is highly dependent on the circuit partitioning algorithm.
For example, the 16-qubit transpiled implementation of Shor's algorithm sees little improvement com\-pared to the gate-level strategy. 
Using two-qubit blocks results in a success rate of 26\% for this circuit.
In the three-qubit case only 12\% of partitioned subcircuits are successfully diagonalized.
Shor's algorithm consists of many repeated copies QFT and inverse QFT modules, which our data indicate is a class of circuit that can be simplified greatly by diagonalization.
For two- and three-qubit partitions, 91\% and 85\% of subcircuits taken from the example 32 qubit QFT circuit are successfully transpiled by diagonalizing.
The performance of the transpilation strategy is therefore highly likely to be dependent on the partitioning algorithm used.
Partitioning strategies which are informed by circuit structure are likely to improve results.

Combined with circuit partitioning, unitary synthesis enables transpilation to take place on a less localized scale.
This more global view of a circuit's function enables synthesis to outperform simple gate-level methods when paired with approximation in the FT setting.

\section{Discussion}
\label{sec:discussion}
FT synthesis algorithms must be able to approximate unitaries to high levels of precision.
In the worst case, the total approximation error in a transpiled circuit is the sum of each individual gate's and part\-ition's approximation error \cite{wang_1994_trace_inequality}.
If the circuits shown in Section~\ref{sec:workflow} (see Table~\ref{tab:transpilation_comparison}) had been transpiled by a synthesis algorithm capable of only finding solutions with precision $\epsilon=10^{-3}$, the average total approximation error of the transpiled circuits would be upper bounded by $\epsilon_{\text{total}} = 0.56$ (ranging from 0.05 to the max error of 1.0).
This much approximation error is unlikely to lead to meaningful algorithm outputs.
The outlook for coarse synthesis is even worse for wider circuits which contain more gates.

Because diagonalization allows for $2^n-1$ continuous rotations to be handled analytically, much higher levels of precision are possible with this approach.
In fact, we can consider the class of \emph{exactly diagonalizable} unitaries (analogous to exactly synthesizable unitaries), where $U = R D_\theta^{-1} L$ and each entry of $L, R \in \integers[e^{i\pi/4}, \frac{1}{2}]$.
These exactly diagonalizable unitaries can be approximated to arbitrarily high levels of precision using synthesis by diagonalization, so long as $L$ and $R$ can be found.
Common examples of these unitaries include controlled rotation gates (Section~\ref{sec:vs_synthetiq}).

Most of the unitaries diagonalization finds solutions for fit into the category of exact diagonalizability.
For this reason, values more precise than $\epsilon=10^{-6}$-$10^{-8}$ (which are shown in this paper) can be attained.
These values are orders of magnitude ($10^3$-$10^5\times$) more precise than previous search-based methods (Table~\ref{tab:rlcomparison}) and are enough to enable high precision transpilation of complete algorithms (Table~\ref{tab:transpilation_comparison}).
Deter\-mining the exact precision needed to ensure circuits produce meaningful results is an open area of research.
How techniques such as unitary mixing~\cite{campbell_2017_mixing} can be used in this setting to boost precision with ensembles of coarse $R_Z$ implementations is worth exploring.
Unitary mixing specifically enables quadratic improvements in precision, meaning this technique can be used to boost precision from $10^{-6}$ to $10^{-12}$.

Our synthesis approach is composable and extensible.
As most FT synthesis work has focused on the Clifford+T gate set, there may be unexplored optimization opportunity when considering alternate gate sets such as Clifford+$\sqrt{T}$ and Clifford+V.
Search-based synthesis tools are well suited to begin answering these questions, as synthesis in alternative gate sets is simply a matter of modifying what gates can be applied.

Recent work~\cite{gosset_2024_diagonaltcount} has proved that a diagonal unitary $D_\theta$ can be approximated to a diamond distance threshold of $\epsilon_\diamond$, where the T count scales as
\begin{equation}
    \text{T-count}(D_\theta) = \Theta(\sqrt{2^n \log \frac{1}{\epsilon_\diamond}} + \log\frac{1}{\epsilon_\diamond}),
    \label{eqn:diagtcount}
\end{equation}
but an efficient algorithm achieving this bound has not been found.
Our approach uses $O(2^n \log\frac{1}{\epsilon})$ T gates to synthesize diagonal unitaries.
Improvements in techniques for synthesizing diagonal unitaries can therefore greatly improve the T-count of our synthesis-by-diagonalization approach.
This improvement in resource efficiency is illustrated by the grey region shown in Figure~\ref{fig:ccry}.

This work showcases the benefit of augmenting analytical decomposition methods with search-based multi-qubit methods. 
We expect that other analytical-search-based hybrid approaches that offer further improvements are both possible and practical.

\subsection{The Utility of Diagonalization for Compilation Tasks}
By augmenting search-based compilation with generalized analytical decomposition, diagonalization greatly expands the domain of unitaries which can be transpiled and increases the precision to which they are transpiled.
However, diagonalization also enables even more powerful compilation strategies.

Ancilla qubits and projective measurements exposes more opportunity for optimization when imp\-lementing operations in FT gate sets. 
Repeat Until Success schemes~\cite{bocharov_2015_rus, bocharov_2015_fallback} leverage these resources to reduce the number of non-Clifford gates needed to implement $R_Z$ rotations compared to optimal ancilla-free synthesis.
These techniques rely on synthesizing unitaries with a particular structure (the \emph{Jack of Daggers} structure \cite{bocharov_2015_rus}).
We believe diagonalization and other powerful synthesis techniques will help discover more structures which systematically reduce non-Clifford gate counts when using ancilla.

By loosening the synthesis objective from inversion to diagonalization, our tool exposes opportunity for resource-efficient implementations of continuous rotations. 
As shown in ~\cite{gosset_2024_diagonaltcount} and mentioned in Section~\ref{eqn:diagtcount}, further improvements can be made in the diagonal synthesis procedure.
Improvements here would automatically benefit synthesis-by-diagonalization.

\section{Conclusion}
\label{sec:conclusion}
We have demonstrated a novel approach to high precision multi-qubit unitary synthesis targeting fault-tolerant gate sets.
By diagonalizing unitaries instead of directly inverting them, complex continuous degrees of freedom can be bypassed then handled efficiently with mathematical decomposition methods.
This enables us to synthesize a broader scope of unitaries than other synthesis methods which directly invert continuous rotations, a task which is intractable for high levels of precision.
The diagonalization process is general and other synthesis tools can be retrofitted to diagonalize rather than invert unitaries.

The effectiveness of our diagonalizing approach is demonstrated by synthesizing unitaries taken from a suite of partitioned quantum algorithms to very high precision.
In this regime, only resource inefficient analytical methods are also able to find solutions.
Compared to the Quantum Shannon Decomposition, our approach reduces the number of expensive T gates by $83.5\%$ in two-qubit unitaries and $95.1\%$ in three-qubit unitaries.
Our approach can be used to transpile future-term algorithms to fault-tolerant gate sets with very low approximation error.
Using diagonalization results in up to a 18.1\% reduction in non-Clifford gates compared to gate-by-gate transpilation.
The diagonalizing approach is fast and able to find low gate count implementations of meaningful unitaries, making it a promising technique for use in future fault-tolerant quantum compilers.

\bibliographystyle{eptcs}
\bibliography{refs}

\end{document}